\journal{European Journal of Operational Research}
\newtheorem{theorem}{Theorem}
\newtheorem{Example}[theorem]{Example}
\newtheorem{lemma}[theorem]{Lemma}
\newtheorem{proposition}[theorem]{Proposition}
\newenvironment{proof}[1][Proof]{\textbf{#1.} }{\ \rule{0.5em}{0.5em}}
\begin{document}

\begin{frontmatter}



\title{Hedging Conditional Value at Risk with Options}


\author{Maciej J. Capi\'nski
\footnote{Tel.: +48 505429347, Fax: +48 126173165, E-mail: maciej.capinski@agh.edu.pl}}

\address{AGH University of Science and Technology, Al. Mickiewicza 30, 30-059 Krak\'ow, Poland}

\begin{abstract}
We present a method of hedging Conditional Value at Risk of a position in stock using put options. The result leads to a linear programming problem that can be solved to optimise risk hedging. 
\end{abstract}

\begin{keyword}
Conditional Value at Risk \sep Expected Shortfall \sep measures of risk \sep risk management


\end{keyword}

\end{frontmatter}



\section{Introduction}

One of the natural ideas to reduce risk of a position in stock is to buy put
options. By doing so one can cut off the undesirable scenarios, while
leaving oneself open to the positive outcomes. A choice of a high strike
price of the put option does cut off more of the unfavourable states, but at
the same time produces higher hedging costs. The question of how to balance
the two trends so that the level of risk measured by Value at Risk ($\mathrm{%
VaR}$) is minimised was investigated by Ahn, Boudoukh, Richardson and Whitelaw 
\cite{Ahn}.

The Value at Risk, which is the worst case scenario of loss an investment
might incur at a given confidence level, has established its position as one
of the standard measures of risk, and is widely used throughout the field of
finance and risk management. One of its shortcoming is that it neglects
potential severity of unlikely events. Another, that it is not sub-additive,
and is thus not a coherent risk measure \cite{Delbaen}. Its most common modification to
achieve these goals is the Conditional Value at Risk ($\mathrm{CVaR}$)
(also referred to as `Expected Shortfall'), which takes into the account the
average loss exceeding $\mathrm{VaR}$. The $\mathrm{CVaR}$ is a coherent risk measure (the proof can be found in the work of Acerbi and Tasche \cite{Acerbi}).

In this paper we show a mirror result to \cite{Ahn}, using $\mathrm{CVaR}$
instead of $\mathrm{VaR}$. It turns out that in such setting one can achieve
closed form formulae for $\mathrm{CVaR}$ of stock hedged with puts. These
can be used to optimise the position by solving a linear programming problem.

We restrict our attention to the Black--Scholes model and consider investments in stock and put options.
The optimisation of $\mathrm{CVaR}$ can be carried out under more general
assumptions, using also other securities (as an example see Rockafellar and Uryasev \cite{Rock1, Rock2}). One can also hedge $%
\mathrm{CVaR}$ dynamically (as in the work of Melnikov and Smirnov \cite{Meln}), which provides slightly better results.
Dynamic strategies though require constant rebalancing, which in practice can be costly. Advantages of our approach are as follows: 
its simplicity; closed form analytic formula
for $\mathrm{CVaR}$; protection against risk is very similar to the one attainable using dynamic strategies.

The paper is organised as follows. Section \ref{sec:VaR} recalls the results of Ahn, Boudoukh, Richardson and Whitelaw \cite{Ahn} for hedging of $\mathrm{VaR}$ with put options. This section serves also as preliminaries to the paper. In Section \ref{sec:CVaR} we generalise the result to use $\mathrm{CVaR}$ instead of $\mathrm{VaR}$. The main result of the paper is given in Theorem \ref{th:main}. The section ends with an example of its application. In Section \ref{sec:dynamic} we compare  our method to the results attainable using dynamic strategies. They turn out to be close. We finish the paper with a short conclusion in Section \ref{sec:conclusion}. 

\section{Hedging Value at Risk}\label{sec:VaR}

In this section we set up our notations and recall the results of Ahn,
Boudoukh, Richardson and Whitelaw \cite{Ahn}.

Let $X$ be a random variable, which represents a gain from an investment. For $\alpha $ in $(0,1),$ we define the \emph{Value at Risk} of $X,$ at
confidence level $1-\alpha ,$ as $\mathrm{VaR}^{\alpha }(X)=-q^{\alpha }(X),$
where $q^{\alpha }(X)$ is the upper $\alpha $-quantile of $X$.

We consider the Black--Scholes model, where the stock price evolves
according to $dS(t)=\mu S(t)dt+\sigma S(t)dW(t)$, with the money market
account $dA(t)=rA(t)dt$. A European put option with strike price $K$ and
maturity $T$ has payoff $P(T)=\left( K-S(T)\right) ^{+}$ and costs%
\begin{equation}
P(0)=P(r,T,K,S(0),\sigma)=Ke^{-rT}N(-d_{-})-S(0)N(-d_{+}),  \label{eq:put-price}
\end{equation}%
where%
\begin{eqnarray}
d_{+} &=&d_{+}(r,T,K,S(0),\sigma )=\frac{\ln \frac{S(0)}{K}+\left( r+\frac{1%
}{2}\sigma ^{2}\right) T}{\sigma \sqrt{T}},  \label{eq:d1-d2} \\
d_{-} &=&d_{-}(r,T,K,S(0),\sigma )=d_{+}-\sigma \sqrt{T},  \notag
\end{eqnarray}%
and $N$ is the standard normal cumulative distribution function.

Assume that we buy $x$ shares of stock and $z_{i}$ put options with strikes $%
K_{i},$ which cost $P_{i}(t)$ for $i=1,\ldots ,n$ and $t=0,T$. Let $\mathbf{%
z,}$ $\mathbf{1}$ and $\mathbf{P}(t)$ be vectors in $\mathbb{R}^{n}$ defined
as%
\begin{equation*}
\mathbf{z}=\left[ 
\begin{array}{c}
z_{1} \\ 
\vdots \\ 
z_{n}%
\end{array}%
\right] ,\qquad \mathbf{1}=\left[ 
\begin{array}{c}
1 \\ 
\vdots \\ 
1%
\end{array}%
\right] ,\qquad \mathbf{P}(t)=\left[ 
\begin{array}{c}
P_{1}(t) \\ 
\vdots \\ 
P_{n}(t)%
\end{array}%
\right] .
\end{equation*}%
The value of our investment at time $t$ is $V_{(x,\mathbf{z})}(t)=xS(t)+%
\mathbf{z}^{T}\mathbf{P}(t).$ The following theorem can be used to compute $%
\mathrm{VaR}$ for the discounted gain%
\begin{equation*}
X_{(x,\mathbf{z})}=e^{-rT}V_{(x,\mathbf{z})}(T)-V_{(x,\mathbf{z})}(0).
\end{equation*}

\begin{theorem}
\label{th:VaR}\cite{Ahn} If $z_{i}\geq 0,$ for $i=1,\ldots ,n,$ and $\mathbf{%
z}^{T}\mathbf{1}\leq x,$ then%
\begin{equation}
\mathrm{VaR}^{\alpha }\left( X_{(x,\mathbf{z})}\right) =V_{(x,\mathbf{z}%
)}(0)-e^{-rT}\left( xq^{\alpha }(S(T))-\mathbf{z}^{T}\mathbf{q}^{\alpha }(-%
\mathbf{P}(T))\right) ,  \label{eq:VaR-BS-model}
\end{equation}%
where 
\begin{equation}
\mathbf{q}^{\alpha }(-\mathbf{P}(T))=-\left[ 
\begin{array}{c}
\left( K_{1}-q^{\alpha }(S(T))\right) ^{+} \\ 
\vdots \\ 
\left( K_{n}-q^{\alpha }(S(T))\right) ^{+}%
\end{array}%
\right] .  \label{eq:Quantile-H}
\end{equation}
\end{theorem}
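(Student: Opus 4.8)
The plan is to reduce the claim to a computation of the upper $\alpha$-quantile of the terminal value $V_{(x,\mathbf{z})}(T)$, and then to exploit that, under the stated hypotheses, this value is a \emph{non-decreasing} function of $S(T)$, so that the quantile can be pushed through it.

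First I would use the positive-affine equivariance of the upper quantile, namely $q^{\alpha}(aY-b)=aq^{\alpha}(Y)-b$ for $a>0$ and $b\in\mathbb{R}$. Since $e^{-rT}>0$ and $V_{(x,\mathbf{z})}(0)$ is deterministic, this gives
\[
\mathrm{VaR}^{\alpha}\!\left(X_{(x,\mathbf{z})}\right)=-q^{\alpha}\!\left(e^{-rT}V_{(x,\mathbf{z})}(T)-V_{(x,\mathbf{z})}(0)\right)=V_{(x,\mathbf{z})}(0)-e^{-rT}q^{\alpha}\!\left(V_{(x,\mathbf{z})}(T)\right),
\]
so it suffices to prove $q^{\alpha}\!\left(V_{(x,\mathbf{z})}(T)\right)=xq^{\alpha}(S(T))-\mathbf{z}^{T}\mathbf{q}^{\alpha}(-\mathbf{P}(T))$.

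Next I would write $V_{(x,\mathbf{z})}(T)=g(S(T))$ with $g(s)=xs+\sum_{i=1}^{n}z_{i}(K_{i}-s)^{+}$ and check that $g$ is continuous and non-decreasing. Indeed $g$ is piecewise linear with break-points among the strikes $K_{i}$, and away from them $g'(s)=x-\sum_{i:\,K_{i}>s}z_{i}\ge x-\mathbf{z}^{T}\mathbf{1}\ge 0$, where the first inequality uses $z_{i}\ge 0$ and the second uses $\mathbf{z}^{T}\mathbf{1}\le x$. This monotonicity is the conceptual heart of the argument and the only place where both hypotheses are used; the computation is short, so I do not expect a serious obstacle here, but pinning down the slope bound and hence the correct direction of monotonicity is the point that must be handled carefully.

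Finally I would invoke the commutation of monotone continuous maps with quantiles: if $g$ is non-decreasing and continuous and $Y$ has a continuous, strictly increasing distribution function, then $q^{\alpha}(g(Y))=g(q^{\alpha}(Y))$. Applying this to $Y=S(T)$, which is lognormal in the Black--Scholes model and hence atomless with strictly increasing distribution function, yields
\[
q^{\alpha}\!\left(V_{(x,\mathbf{z})}(T)\right)=g\!\left(q^{\alpha}(S(T))\right)=xq^{\alpha}(S(T))+\sum_{i=1}^{n}z_{i}\bigl(K_{i}-q^{\alpha}(S(T))\bigr)^{+}.
\]
By the definition (\ref{eq:Quantile-H}) the sum equals $-\mathbf{z}^{T}\mathbf{q}^{\alpha}(-\mathbf{P}(T))$, and substituting into the identity for $\mathrm{VaR}^{\alpha}$ above gives (\ref{eq:VaR-BS-model}). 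If there is any real obstacle it lies in this last step: the commutation lemma needs care with the convention for the upper $\alpha$-quantile when $g$ has a flat stretch straddling $q^{\alpha}(S(T))$ --- which happens precisely when $\mathbf{z}^{T}\mathbf{1}=x$ and $q^{\alpha}(S(T))<\min_{i}K_{i}$ --- but, since $S(T)$ has no atoms, the distribution function of $g(S(T))$ still crosses the level $1-\alpha$ exactly at $g(q^{\alpha}(S(T)))$, so the identity holds regardless; alternatively one may simply cite this commutation as a standard property of quantiles of monotone transformations.
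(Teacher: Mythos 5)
Your argument is correct. The paper does not actually prove Theorem \ref{th:VaR} --- it is quoted from Ahn et al.\ \cite{Ahn} --- but your route (writing $V_{(x,\mathbf{z})}(T)=g(S(T))$ with $g(s)=xs+\sum_{i}z_{i}(K_{i}-s)^{+}$ non-decreasing precisely because $z_{i}\geq 0$ and $\mathbf{z}^{T}\mathbf{1}\leq x$, then pushing the upper quantile through the continuous monotone map and through the positive affine discounting) is exactly the device the paper itself deploys in the proof of Proposition \ref{prop:CVaR-single-strike}, so this is essentially the intended argument. One small slip in your closing remark: the distribution function of $g(S(T))$ crosses (or jumps over) the level $\alpha$, not $1-\alpha$, at $g(q^{\alpha}(S(T)))$, and the relevant flat stretch is $(0,\min\{K_{i}:z_{i}>0\})$; the conclusion you draw there is nevertheless sound.
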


\section{Hedging Conditional Value at Risk}\label{sec:CVaR}

One of the shortcomings of $\mathrm{VaR}$ is that it neglects the tail of
the loss distribution. An improvement in this respect is the \emph{%
Conditional Value at Risk}, defined as%
\begin{equation*}
\mathrm{CVaR}^{\alpha }(X)=\frac{1}{\alpha }\int_{0}^{\alpha }\mathrm{VaR}%
^{\beta }(X)d\beta =-\frac{1}{\alpha }\int_{0}^{\alpha }q^{\beta }(X)d\beta ,
\end{equation*}%
with a well known equivalent form 
\begin{equation}
\mathrm{CVaR}^{\alpha }(X)=-\frac{1}{\alpha }\left[ \mathbb{E}(X\mathbf{1}%
_{\{X\le q^{\alpha }(X)\}})+q^{\alpha }(X)(\alpha -\mathbb{P}(X\le q^{\alpha
}(X))\right] .  \label{Eq:CVaR_jump}
\end{equation}%
The $\mathrm{CVaR}$ also has the advantage of being a coherent risk measure 
\cite{Acerbi, Delbaen}.

Our aim is to give a mirror result to Theorem \ref{th:VaR}, using $\mathrm{%
CVaR}$ as the risk measure. We start with a simple lemma.

\begin{lemma}
\label{lem:conditional-exp-S(T)}For any $q\in \mathbb{R}$, 
\begin{equation*}
\mathbb{E}\left( S(T)|W(T)\leq q\sqrt{T}\right) =\frac{1}{N(q)}S(0)e^{\mu
T}N\left( q-\sigma \sqrt{T}\right) .
\end{equation*}
\end{lemma}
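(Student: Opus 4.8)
The plan is to compute the conditional expectation directly from the lognormal distribution of $S(T)$ in the Black--Scholes model. Recall that under the real-world measure $S(T)=S(0)\exp\left(\left(\mu-\tfrac12\sigma^2\right)T+\sigma W(T)\right)$, and that $W(T)$ is normally distributed with mean $0$ and variance $T$. The event $\{W(T)\le q\sqrt{T}\}$ has probability $N(q)$, so by definition of conditional expectation,
\begin{equation*}
\mathbb{E}\left(S(T)\mid W(T)\le q\sqrt{T}\right)=\frac{1}{N(q)}\,\mathbb{E}\left(S(T)\mathbf{1}_{\{W(T)\le q\sqrt{T}\}}\right).
\end{equation*}
The task reduces to evaluating the numerator.

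Next I would write $W(T)=\sqrt{T}\,Z$ with $Z$ a standard normal variable, so that $S(T)=S(0)\exp\left(\left(\mu-\tfrac12\sigma^2\right)T+\sigma\sqrt{T}\,Z\right)$, and express the numerator as a Gaussian integral:
\begin{equation*}
\mathbb{E}\left(S(T)\mathbf{1}_{\{W(T)\le q\sqrt{T}\}}\right)=S(0)e^{\left(\mu-\frac12\sigma^2\right)T}\int_{-\infty}^{q}e^{\sigma\sqrt{T}\,y}\frac{1}{\sqrt{2\pi}}e^{-y^2/2}\,dy.
\end{equation*}
The standard move is to complete the square in the exponent: $\sigma\sqrt{T}\,y-\tfrac12 y^2=-\tfrac12\left(y-\sigma\sqrt{T}\right)^2+\tfrac12\sigma^2 T$. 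This produces a factor $e^{\frac12\sigma^2 T}$, which cancels the $e^{-\frac12\sigma^2 T}$ already present, leaving $S(0)e^{\mu T}$ times $\int_{-\infty}^{q}\frac{1}{\sqrt{2\pi}}e^{-(y-\sigma\sqrt{T})^2/2}\,dy$. Substituting $u=y-\sigma\sqrt{T}$ turns this last integral into $N\left(q-\sigma\sqrt{T}\right)$, and dividing by $N(q)$ gives the claimed formula.

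This argument is essentially a routine Gaussian computation, so I do not anticipate a genuine obstacle; the only point requiring care is bookkeeping with the constants in the exponent so that the $\sigma^2 T$ terms cancel correctly and one is left with exactly $e^{\mu T}$ rather than a stray $e^{\pm\frac12\sigma^2 T}$. An equivalent and perhaps slicker route is to recognize $e^{\sigma\sqrt{T}\,y}\frac{1}{\sqrt{2\pi}}e^{-y^2/2}=e^{\frac12\sigma^2T}\cdot\frac{1}{\sqrt{2\pi}}e^{-(y-\sigma\sqrt{T})^2/2}$ as a shifted normal density scaled by $e^{\frac12\sigma^2T}$, which makes the cancellation transparent; this is the same idea used in deriving the Black--Scholes put price formula \eqref{eq:put-price}, so the computation is entirely standard.
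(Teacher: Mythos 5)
Your proposal is correct and follows essentially the same route as the paper: write the conditional expectation as the truncated expectation divided by $N(q)$, express it as a Gaussian integral in the standardized variable $Z=W(T)/\sqrt{T}$, complete the square so the $\tfrac12\sigma^2T$ terms cancel, and identify the resulting integral as $N\left(q-\sigma\sqrt{T}\right)$. No gaps.
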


\begin{proof}
Let $Z=W(T)/\sqrt{T}$. Since $\mathbb{P}(Z\leq q)=N(q)>0,$%
\begin{align*}
\mathbb{E}\left( S(T)|Z\leq q\right) & =\frac{1}{P(Z\leq q)}\int_{-\infty
}^{q}S(0)e^{\left( \left( \mu -\frac{\sigma ^{2}}{2}\right) T+\sigma \sqrt{T}%
x\right) }\frac{1}{\sqrt{2\pi }}e^{-\frac{x^{2}}{2}}dx \\
& =\frac{1}{N(q)}S(0)e^{\mu T}\int_{-\infty }^{q}\frac{1}{\sqrt{2\pi }}e^{-%
\frac{\left( x-\sigma \sqrt{T}\right) ^{2}}{2}}dx \\
& =\frac{1}{N(q)}S(0)e^{\mu T}N\left( q-\sigma \sqrt{T}\right) ,
\end{align*}%
as required.
\end{proof}

Let $Z$ be a random variable with standard normal distribution $N(0,1)$. To
compute $\mathrm{CVaR}^{\alpha }(X_{(x,\mathbf{z})})$, we introduce
notations 
\begin{equation*}
\begin{array}{lll}
d_{-}^{\mu }=d_{-}(\mu ,T,K,S(0),\sigma ), &  & d_{+}^{\mu }=d_{-}^{\mu
}+\sigma \sqrt{T}, \\ 
d_{-}^{\mu ,\alpha }=\max \left( d_{-}^{\mu },-q^{\alpha }(Z)\right) , &  & 
d_{+}^{\mu ,\alpha }=d_{-}^{\mu ,\alpha }+\sigma \sqrt{T},%
\end{array}%
\end{equation*}%
\begin{equation}
P^{\alpha }(K)=Ke^{-\mu T}N(-d_{-}^{\mu ,\alpha })-S(0)N\left( -d_{+}^{\mu
,\alpha }\right) .  \label{eq:P-mu-alpha}
\end{equation}
We first consider the case when we invest in puts with a single strike $%
K_{1}=K$.

\begin{proposition}
\label{prop:CVaR-single-strike}If $\mathbf{z}=\left[ z_{1}\right] ,$ for $%
z_{1}=z\in \lbrack 0,x]$, then 
\begin{equation*}
\mathrm{CVaR}^{\alpha }\left( X_{\left( x,\mathbf{z}\right) }\right) =V_{(x,%
\mathbf{z})}(0)-\frac{1}{\alpha }e^{(\mu -r)T}\left[ xS(0)N\left( q^{\alpha
}(Z)-\sigma \sqrt{T}\right) +zP^{\alpha }(K)\right] .
\end{equation*}
\end{proposition}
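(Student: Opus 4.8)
The plan is to compute $\mathrm{CVaR}^{\alpha}$ directly from the integral definition $\mathrm{CVaR}^{\alpha}(X) = -\frac{1}{\alpha}\int_{0}^{\alpha} q^{\beta}(X)\,d\beta$, using the quantile formula for the hedged position already established in Theorem \ref{th:VaR}. The key observation is that, under the hypothesis $z\in[0,x]$ (so that $\mathbf{z}^{T}\mathbf{1} = z \le x$ and $z\ge 0$), formula \eqref{eq:VaR-BS-model} gives for every $\beta\in(0,1)$
\[
q^{\beta}(X_{(x,\mathbf{z})}) = e^{-rT}\bigl(x\,q^{\beta}(S(T)) - z\,(K - q^{\beta}(S(T)))^{+}\bigr) - V_{(x,\mathbf{z})}(0),
\]
since $-\mathrm{VaR}^{\beta} = q^{\beta}$. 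So I would integrate this expression over $\beta\in(0,\alpha)$. The constant term $V_{(x,\mathbf{z})}(0)$ integrates trivially, and what remains is to evaluate $\frac{1}{\alpha}\int_{0}^{\alpha} q^{\beta}(S(T))\,d\beta$ and $\frac{1}{\alpha}\int_{0}^{\alpha}(K-q^{\beta}(S(T)))^{+}\,d\beta$.

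For the first integral: since $S(T) = S(0)\exp((\mu-\tfrac{1}{2}\sigma^2)T + \sigma W(T))$ is an increasing function of $W(T)$, we have $q^{\beta}(S(T)) = S(0)\exp((\mu-\tfrac{1}{2}\sigma^2)T + \sigma\sqrt{T}\,q^{\beta}(Z))$. The substitution $\beta = N(y)$, $d\beta = N'(y)\,dy$ (so that $q^{\beta}(Z) = y$ ranges over $(-\infty, q^{\alpha}(Z))$ as $\beta$ ranges over $(0,\alpha)$) converts $\int_{0}^{\alpha} q^{\beta}(S(T))\,d\beta$ into exactly the Gaussian integral appearing in the proof of Lemma \ref{lem:conditional-exp-S(T)}; completing the square in the exponent yields $S(0)e^{\mu T}N(q^{\alpha}(Z) - \sigma\sqrt{T})$. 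Equivalently, one recognizes $\frac{1}{\alpha}\int_{0}^{\alpha}q^{\beta}(S(T))\,d\beta = \mathbb{E}(S(T)\mid W(T)\le q^{\alpha}(Z)\sqrt{T})$ and applies Lemma \ref{lem:conditional-exp-S(T)} directly with $q = q^{\alpha}(Z)$, giving the term $\frac{x}{\alpha}S(0)N(q^{\alpha}(Z)-\sigma\sqrt{T})$ after multiplying by $x$.

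For the second integral, note $(K - q^{\beta}(S(T)))^{+}$ is nonzero precisely when $q^{\beta}(S(T)) < K$, i.e. when $q^{\beta}(Z) < -d_{-}^{\mu}$ (unwinding the definition of $d_{-}^{\mu} = d_{-}(\mu,T,K,S(0),\sigma)$; this is the $\mu$-analogue of the usual $d_{-}$). Thus the integrand's support in $\beta$ is $(0,\min(\alpha, N(-d_{-}^{\mu})))$, and using the substitution $\beta = N(y)$ again the upper limit becomes $\min(q^{\alpha}(Z), -d_{-}^{\mu}) = -\max(-q^{\alpha}(Z), d_{-}^{\mu}) = -d_{-}^{\mu,\alpha}$. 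The integral then splits into a $K$ times a Gaussian mass term and a term of the form $\int S(0)e^{(\mu - \sigma^2/2)T + \sigma\sqrt{T}y}N'(y)\,dy$; completing the square in the latter produces $S(0)e^{\mu T}N(-d_{+}^{\mu,\alpha})$ with $d_{+}^{\mu,\alpha} = d_{-}^{\mu,\alpha} + \sigma\sqrt{T}$. Assembling the pieces gives $\frac{1}{\alpha}\int_0^\alpha (K-q^\beta(S(T)))^+ d\beta = \frac{1}{\alpha}\bigl(Ke^{-\mu T}N(-d_{-}^{\mu,\alpha}) - S(0)N(-d_{+}^{\mu,\alpha})\bigr)e^{\mu T}$; recognizing the bracket as $e^{\mu T}P^{\alpha}(K)$ from \eqref{eq:P-mu-alpha} and collecting all terms with the common factor $\frac{1}{\alpha}e^{-rT}\cdot e^{\mu T} = \frac{1}{\alpha}e^{(\mu-r)T}$ yields the claimed formula.

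The main obstacle is bookkeeping rather than conceptual: one must carefully track the truncation of the quantile range, verifying that replacing $d_{-}^{\mu}$ by $d_{-}^{\mu,\alpha} = \max(d_{-}^{\mu}, -q^{\alpha}(Z))$ correctly captures the interaction between the confidence level $\alpha$ and the strike-dependent threshold, and to make sure the edge cases (the put being deep in or out of the money relative to $q^{\alpha}(S(T))$) are handled by the $\max$ in a single unified formula. One should also double-check the sign conventions relating $q^{\beta}$, $\mathrm{VaR}^{\beta}$, and the quantile of $-\mathbf{P}(T)$ from \eqref{eq:Quantile-H}, so that the put contribution enters with the correct sign.
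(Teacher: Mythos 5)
Your argument is correct, but it follows a genuinely different route from the paper. The paper starts from the representation $\mathrm{CVaR}^{\alpha}(X)=-\mathbb{E}\left(X\mid X\leq q^{\alpha}(X)\right)$, justified by showing the payoff is a strictly increasing function of $S(T)$ when $z<x$ (so there is no atom at the quantile), computes the two conditional expectations via Lemma \ref{lem:conditional-exp-S(T)} and a direct Gaussian integral, and then handles $z=x$ by a separate continuity-in-$z$ limiting argument. You instead take the quantile-integral definition $\mathrm{CVaR}^{\alpha}(X)=-\frac{1}{\alpha}\int_{0}^{\alpha}q^{\beta}(X)\,d\beta$ at face value, feed in the closed-form quantile from Theorem \ref{th:VaR}, and convert the $\beta$-integral into the same Gaussian integrals via $\beta=N(y)$. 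The two computations are of course equivalent (your substitution is exactly what turns the quantile integral into a tail expectation), but your version buys two things: it reuses the cited Theorem \ref{th:VaR} as a black box instead of re-deriving the comonotonicity of the hedged payoff, and--more substantively--it dispenses with the separate $z=x$ case, since Theorem \ref{th:VaR} is valid for $\mathbf{z}^{T}\mathbf{1}\leq x$ and the integral definition is insensitive to the atom that $X_{(x,x)}$ has on $\{S(T)\leq K\}$. The one concrete slip is the sign in your displayed quantile formula: since $\mathbf{q}^{\beta}(-\mathbf{P}(T))$ in \eqref{eq:Quantile-H} carries a leading minus sign, unwinding \eqref{eq:VaR-BS-model} gives $q^{\beta}(X_{(x,\mathbf{z})})=e^{-rT}\left(x\,q^{\beta}(S(T))+z\left(K-q^{\beta}(S(T))\right)^{+}\right)-V_{(x,\mathbf{z})}(0)$ with a plus, not a minus; this is also what one expects from monotonicity, since the put adds value in the bad states. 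Your subsequent integration and final assembly are consistent with the correct (plus) sign, so only that intermediate line needs fixing; the rest of the bookkeeping, including the truncation at $-d_{-}^{\mu,\alpha}$ and the identification of the bracket with $e^{\mu T}P^{\alpha}(K)$ from \eqref{eq:P-mu-alpha}, checks out.
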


\begin{proof}
We first observe that%
\begin{equation}
X_{\left( x,\mathbf{z}\right) }=e^{-rT}\left( xS(T)+z\left( K-S(T)\right)
^{+}\right) -V_{(x,\mathbf{z})}(0).  \label{eq:CVaR-put-tmp1}
\end{equation}%
Since $z\leq x$, we see that 
\begin{equation}
s\rightarrow e^{-rT}\left( xs+z\left( K-s\right) ^{+}\right) -V_{(x,\mathbf{z%
})}(0)  \label{eq:s-gain-for-CVaR}
\end{equation}%
is a non-decreasing function of $s$. Also $\xi \rightarrow S(0)\exp \left(
\left( \mu -\sigma ^{2}/2\right) T+\sigma \sqrt{T}\xi \right) $ is
increasing. Combining these two facts, taking $Z=W(T)/\sqrt{T},$%
\begin{equation}
\left\{ X_{(x,\mathbf{z})}\leq q^{\alpha }(X_{(x,\mathbf{z})})\right\}
=\left\{ S(T)\leq q^{\alpha }(S(T))\right\} =\left\{ Z\leq q^{\alpha
}(Z)\right\} .  \label{eq:BS-CVaR-quantiles-put}
\end{equation}

We first prove the claim for $z<x.$ Then (\ref{eq:s-gain-for-CVaR}) is
strictly increasing, therefore $\mathbb{P}(X_{(x,\mathbf{z})}\leq q^{\alpha
}(X_{(x,\mathbf{z})}))=\mathbb{P}(S(T)\leq q^{\alpha }(S(T)))=\alpha ,$ and%
\begin{align}
\mathrm{CVaR}^{\alpha }(X_{\left( x,\mathbf{z}\right) })& =-\mathbb{E}\left(
X_{(x,\mathbf{z})}|X_{(x,\mathbf{z})}\leq q^{\alpha }(X_{\left( x,\mathbf{z}%
\right) })\right)  \notag \\
& =-\mathbb{E}\left( X_{(x,\mathbf{z})}|Z\leq q^{\alpha }(Z)\right) \hspace{%
3.3cm}\text{(see (\ref{eq:BS-CVaR-quantiles-put}))}  \notag \\
& =V_{(x,z)}(0)-e^{-rT}x\mathbb{E}\left( S(T)|Z\leq q^{\alpha }(Z)\right)
\qquad \text{(see (\ref{eq:CVaR-put-tmp1}))}  \notag \\
& \quad -e^{-rT}z\mathbb{E}\left( \left( K-S(T)\right) ^{+}|Z\leq q^{\alpha
}(Z)\right) .  \label{eq:CVaR-Stock-put-tmp}
\end{align}%
We now compute the last term in (\ref{eq:CVaR-Stock-put-tmp}). Since $%
\left\{ S(T)\leq K\right\} =\left\{ Z\leq -d_{-}^{\mu }\right\} ,$%
\begin{align*}
& \mathbb{E}\left( \left( K-S(T)\right) ^{+}|Z\leq q^{\alpha }(Z)\right) \\
& =\frac{1}{\alpha }\int_{-\infty }^{\min (q^{\alpha }(Z),-d_{-}^{\mu
})}\left( K-S(0)e^{\left( \mu -\frac{\sigma ^{2}}{2}\right) T+\sigma \sqrt{T}%
x}\right) \frac{1}{\sqrt{2\pi }}e^{-x^{2}}dx \\
& =\frac{1}{\alpha }\int_{-\infty }^{-d_{-}^{\mu ,\alpha }}K\frac{1}{\sqrt{%
2\pi }}e^{-x^{2}}dx-\frac{1}{\alpha }\int_{-\infty }^{-d_{-}^{\mu ,\alpha
}}S(0)e^{\left( \mu -\frac{\sigma ^{2}}{2}\right) T+\sigma \sqrt{T}x}\frac{1%
}{\sqrt{2\pi }}e^{-x^{2}}dx \\
& =\frac{1}{\alpha }KN(-d_{-}^{\mu ,\alpha })-\frac{1}{\alpha }\mathbb{P}%
(Z\leq -d_{-}^{\mu ,\alpha })\mathbb{E}\left( S(T)|Z\leq -d_{-}^{\mu ,\alpha
}\right) \\
& =\frac{1}{\alpha }KN(-d_{-}^{\mu ,\alpha })-\frac{1}{\alpha }S(0)e^{\mu
T}N\left( -d_{-}^{\mu ,\alpha }-\sigma \sqrt{T}\right) \qquad \text{(by
Lemma \ref{lem:conditional-exp-S(T)})} \\
& =\frac{1}{\alpha }e^{\mu T}\left( Ke^{-\mu T}N(-d_{-}^{\mu ,\alpha
})-S(0)N\left( -d_{+}^{\mu ,\alpha }\right) \right) .
\end{align*}%
Substituting the above into (\ref{eq:CVaR-Stock-put-tmp}) and applying Lemma %
\ref{lem:conditional-exp-S(T)} gives the claim.

We now need to consider the case when $z=x$. Since for any $\beta \in (0,1)$%
, $\lim_{z\nearrow x}q^{\beta }(X_{(x,\mathbf{z})})=q^{\beta }(X_{(x,x)}),$
we obtain 
\begin{multline*}
\lim_{z\nearrow x}\mathrm{CVaR}^{\alpha }\left( X_{(x,\mathbf{z})}\right)
=\lim_{z\nearrow x}\frac{-1}{\alpha }\int_{0}^{\alpha }q^{\beta }(X_{(x,%
\mathbf{z})})d\beta \\
=\frac{-1}{\alpha }\int_{0}^{\alpha }q^{\beta }(X_{(x,x)})d\beta =\mathrm{%
CVaR}^{\alpha }\left( X_{(x,x)}\right) .
\end{multline*}%
Hence the result follows from the fact that the formula for $\mathrm{CVaR}%
^{\alpha }(X_{(x,\mathbf{z})})$ in the claim is continuous with respect to $%
z $.
\end{proof}

We can now formulate our main result.

\begin{theorem}
\label{th:main}If $z_{i}\geq 0$ for $i=1,\ldots ,n$ and $z_{1}+\ldots
+z_{n}\leq x,$ then%
\begin{equation}
\mathrm{CVaR}^{\alpha }(X_{(x,\mathbf{z})})=V_{(x,\mathbf{z})}(0)-\frac{1}{%
\alpha }e^{(\mu -r)T}\left[ xS(0)N\left( q^{\alpha }(Z)-\sigma \sqrt{T}%
\right) +\mathbf{z}^{\mathrm{T}}\mathbf{P}^{\alpha }\right] ,
\label{eq:CVaR-BS-model}
\end{equation}%
where $\mathbf{P}^{\alpha }=\left( P^{\alpha }(K_{1}),\ldots ,P^{\alpha
}(K_{n})\right) $.
\end{theorem}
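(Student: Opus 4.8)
The plan is to reduce the multi-strike case to the single-strike case handled in Proposition~\ref{prop:CVaR-single-strike}. The key structural observation is the same one used there: under the hypothesis $z_i\ge 0$ and $z_1+\dots+z_n\le x$, the payoff function
\begin{equation*}
s\longmapsto e^{-rT}\Bigl(xs+\sum_{i=1}^n z_i(K_i-s)^+\Bigr)-V_{(x,\mathbf z)}(0)
\end{equation*}
is still non-decreasing in $s$ (each $(K_i-s)^+$ has slope in $[-1,0]$, and the total put weight is at most $x$). Hence, exactly as in \eqref{eq:BS-CVaR-quantiles-put}, the events $\{X_{(x,\mathbf z)}\le q^\alpha(X_{(x,\mathbf z)})\}$, $\{S(T)\le q^\alpha(S(T))\}$ and $\{Z\le q^\alpha(Z)\}$ coincide.

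First I would treat the strict case $z_1+\dots+z_n<x$, where the payoff is \emph{strictly} increasing, so $\mathbb P(X_{(x,\mathbf z)}\le q^\alpha(X_{(x,\mathbf z)}))=\alpha$ and therefore $\mathrm{CVaR}^\alpha(X_{(x,\mathbf z)})=-\mathbb E(X_{(x,\mathbf z)}\mid Z\le q^\alpha(Z))$. Writing $X_{(x,\mathbf z)}$ explicitly and using linearity of conditional expectation,
\begin{equation*}
\mathrm{CVaR}^\alpha(X_{(x,\mathbf z)})=V_{(x,\mathbf z)}(0)-e^{-rT}x\,\mathbb E\bigl(S(T)\mid Z\le q^\alpha(Z)\bigr)-e^{-rT}\sum_{i=1}^n z_i\,\mathbb E\bigl((K_i-S(T))^+\mid Z\le q^\alpha(Z)\bigr).
\end{equation*}
The first expectation is handled by Lemma~\ref{lem:conditional-exp-S(T)}, giving $S(0)e^{\mu T}N(q^\alpha(Z)-\sigma\sqrt T)$. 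Each term in the sum is exactly the conditional-put-expectation computed inside the proof of Proposition~\ref{prop:CVaR-single-strike} (with $K$ replaced by $K_i$), which evaluates to $\tfrac1\alpha e^{\mu T}P^\alpha(K_i)$. Collecting $e^{-rT}e^{\mu T}=e^{(\mu-r)T}$ and the factor $\tfrac1\alpha$ yields \eqref{eq:CVaR-BS-model}.

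Finally I would dispose of the boundary case $z_1+\dots+z_n=x$ by the same continuity/limit argument as in Proposition~\ref{prop:CVaR-single-strike}: pick any sequence $\mathbf z^{(k)}\to\mathbf z$ with $z^{(k)}_1+\dots+z^{(k)}_n<x$ and $z^{(k)}_i\ge 0$ (e.g. scale $\mathbf z$ by factors tending to $1$ from below), note $q^\beta(X_{(x,\mathbf z^{(k)})})\to q^\beta(X_{(x,\mathbf z)})$ for each $\beta\in(0,1)$, pass to the limit in $\mathrm{CVaR}^\alpha=-\tfrac1\alpha\int_0^\alpha q^\beta\,d\beta$ (dominated convergence, using a uniform bound coming from the explicit payoff), and observe that the right-hand side of \eqref{eq:CVaR-BS-model} is continuous in $\mathbf z$.

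I do not expect any serious obstacle here — the theorem is essentially Proposition~\ref{prop:CVaR-single-strike} applied term-by-term, since both the quantile identity and the conditional-expectation computations are additive in the put positions. The only point requiring a little care is verifying monotonicity of the combined payoff (needing $\sum z_i\le x$, not merely each $z_i\le x$) so that the crucial event identity \eqref{eq:BS-CVaR-quantiles-put} carries over; everything else is bookkeeping.
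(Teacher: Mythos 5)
Your proposal is correct and follows exactly the route the paper intends: the paper's own proof of Theorem \ref{th:main} is just the one-line remark that it ``follows from mirror arguments to the proof of Proposition \ref{prop:CVaR-single-strike}'', and your write-up supplies precisely those arguments (monotonicity of the combined payoff under $\sum_i z_i\le x$, the resulting event identity, term-by-term evaluation of the conditional put expectations, and the continuity argument for the boundary case). In fact your version is more detailed than the paper's, and the one point you flag as needing care --- that monotonicity requires $\sum_i z_i\le x$ rather than $z_i\le x$ individually --- is exactly the right thing to check.
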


\begin{proof}
The proof follows from mirror arguments to the proof of Proposition \ref%
{prop:CVaR-single-strike}.
\end{proof}

We show how Theorem \ref{th:main} can be applied. Assume that $x$ is fixed.
We investigate how to minimise $\mathrm{CVaR}^{\alpha }\left( X_{(x,\mathbf{z%
})}\right) $ by choosing $\mathbf{z}$. Assume that we invest $V_{0}$ and
spend $c=V_{0}-xS(0)$ on put options. By (\ref{eq:CVaR-BS-model}),
minimising $\mathrm{CVaR}^{\alpha }\left( X_{(x,\mathbf{z})}\right) $ is
equivalent to the problem: 
\begin{equation}
\begin{array}{ll}
\min -\mathbf{z}^{\mathrm{T}}\mathbf{P}^{\alpha } &  \\ 
\text{subject to:} & \mathbf{z}^{\mathrm{T}}\mathbf{P}(0)=c, \\ 
& \mathbf{z}^{\mathrm{T}}\mathbf{1}\leq x, \\ 
& z_{0},\ldots ,z_{n}\geq 0.%
\end{array}
\label{eq:CVaR-BS-opt-prob}
\end{equation}%
This is a linear programming problem, which can easily be solved numerically.

The result can be complemented by computing $\mathbb{E}\left( X_{(x,\mathbf{z%
})}\right) $ to give risk/return type analysis. A direct computation gives
\begin{equation*}
\mathbb{E}\left( X_{(x,\mathbf{z})}\right) =e^{-rT}\left[ xS(0)e^{\mu T}+%
\mathbf{z}^{\mathrm{T}}\mathbb{E}\left( \mathbf{P}(T)\right) \right]
-V_{(x,z)}(0),
\end{equation*}%
where%
\begin{equation*}
\mathbb{E}\left( \mathbf{P}(T)\right) =e^{\mu T}\left[
\begin{array}{c}
P(\mu ,T,K_{1},S(0),\sigma ) \\ 
\vdots \\ 
P(\mu ,T,K_{n},S(0),\sigma )%
\end{array}%
\right] .
\end{equation*}

\begin{Example}\label{ex:puts}
\label{example}Consider $S(0)=100$, $\mu =10\%,$ $\sigma =0.2$ and $r=3\%$.
Assume that we spend $V_{0}=1000,$ investing in stock and put options with
strike prices $K_{1}=80,$ $K_{2}=90,$ $K_{3}=100,$ $K_{4}=110$, $K_{5}=120$
and expiry $T=1$. We shall solve (\ref{eq:CVaR-BS-opt-prob}) for $\alpha
=0.05,$ considering $c\in \left[ 0,160\right] $.

The choice of $x$ depends on $c$, since $xS(0)+c=V_{0}.$

We compute the vectors:
\begin{equation*}
\mathbf{P}(0)=\left[ 
\begin{array}{c}
0.860 \\ 
2.769 \\ 
6.458 \\ 
12.042 \\ 
19.220%
\end{array}%
\right] ,\qquad \mathbf{P}^{\alpha }=\left[ 
\begin{array}{c}
0.366 \\ 
0.819 \\ 
1.271 \\ 
1.724 \\ 
2.176%
\end{array}%
\right],
\qquad \mathbb{E}(\mathbf{P}(T))=\left[ 
\begin{array}{c}
0.420\\
1.574\\
4.148\\
8.527\\
14.686
\end{array}%
\right] .
\end{equation*}%
The solutions to the problem (\ref{eq:CVaR-BS-opt-prob}) are:%
\begin{equation*}
\begin{tabular}{c c c c c c c c c} \hline\hline
$c$ &	$x$ &	$z_1$ &	$z_2$ &	$z_3$ &	$z_4$ & $z_5$ & $CVaR^{\alpha}$ & $\mathbb{E}$ \\
[0.5ex] \hline
0 &	10 &	0 &	0 &	0 &	0 &	0 &	302.24 &	72.51 \\
20 &	9.8 &	3.74 &	6.06 &	0 &	0 &	0 &	180.35 &	61.84 \\
40 &	9.6 &	0 &	5.96 &	3.64 &	0 &	0 &	126.24 &	53.35 \\
60 &	9.4 &	0 &	0.19 &	9.21 &	0 &	0 &	89.64 &	45.52 \\
80 &	9.2 &	0 &	0 &	5.51 &	3.69 &	0 &	71.42 &	39.41 \\
100 &	9 &	0 & 	0 &	1.50 &	7.50 &	0 &	53.82 &	33.35 \\
120 &	8.8 &	0 &	0 &	0 &	6.85 &	1.95 &	41.64 &	28.31 \\
140 &	8.6 &	0 &	0 &	0 &	3.52 &	5.08 &	32.70 &	23.86 \\
160 &	8.4 &	0 &	0 &	0 &	0.20 &	8.20 &	23.75 &	19.42 \\
[1ex] \hline\end{tabular}%
\end{equation*}%
From the table we observe that for larger $c$ we can afford to buy options
with higher strike prices, which provide better protection, but are at the
same time more expensive.
\end{Example}

\section{Comparison with dynamic hedging}\label{sec:dynamic}
An alternative to hedging with put options is to engage in a self financing strategy that will reduce the risk. In this section we explore the differences between this approach and our method.

F{\"o}llmer and Leukert \cite{Folmer} developed a method for dynamic optimisation of
VaR. In \cite{Meln}, Melnikov and Smirnov (by combining techniques from
\cite{Folmer} with \cite{Rock1,Rock2}) extend the method to the setting of dynamic
optimisation of $\mathrm{CVaR}$. They consider a contingent claim with a time $T$ payoff $H$, and solve the following problem:
\begin{equation}%
\begin{array}
[c]{l}%
\underset{\xi}{\min}\,\mathrm{CVaR}^{\alpha}(e^{-rT}(V_{\xi}(T)-H)),\\
\text{subject to }V_{\xi}(0)\leq V_{0},
\end{array}
\label{eq:CVaR-dyn-prob}%
\end{equation}
where $V_{\xi}(t)$ is the time $t$ value of a self financing strategy $\xi,$ and $V_{0}\leq
\mathbb{E}_{\ast}\left(  H\right) .$ (Here $\mathbb{E}_{\ast}$ stands for expectation with respect to the risk neutral measure.) Problem (\ref{eq:CVaR-dyn-prob}), in other words, is how to
minimise the risk of a position in a contingent claim $H$, having available $V_{0}$ for hedging, which is smaller than the cost of the replicating strategy of the claim.

In our setting, we hedge a position in $x$ shares of stock. We can take 
\begin{equation}
H=e^{rT}V_{0}-xS(T).\label{eq:CVaR-dyn-payoff}%
\end{equation}
The interpretation of such choice of $H$ is as follows. We borrow $V_{0}$ and buy
$x$ shares of stock.  The remaining
\[c=V_{0}-xS(0),\]
is spent on a self financing strategy $\xi$, which involves continuous time trading
in stock and money market account. The combined position at time $T$ is $-e^{rT}V_{0}+xS(T)+V_{\xi}(T)$. After discounting, this is
\[-V_{0}+e^{-rT}xS(T)+e^{-rT}V_{\xi}(T)=e^{-rT}(V_{\xi}(T)-H),\]
which fits the framework of problem (\ref{eq:CVaR-dyn-prob}).

The following theorem provides the solution to problem (\ref{eq:CVaR-dyn-prob}%
) for the payoff (\ref{eq:CVaR-dyn-payoff}). It is a reformulation of Theorem
2.4 from \cite{Meln} (adapted to our particular setting and notations).

\begin{theorem}\cite{Meln}
\label{th:dynamic-CVaR}Let $K^{\ast}\in\mathbb{R}$ be a number satisfying%
\[
c=x\mathbb{E}_{\ast}\left(  e^{-rT}\left(  K^{\ast}-S(T)\right)
^{+}\right)  .
\]
Let $b(K)$ be a function implicitly defined by%
\begin{equation}
c=x\mathbb{E}_{\ast}\left(  e^{-rT}\left(  K-S(T)\right)  ^{+}\mathbf{1}_{\left\{
S_{T}>b(K)\right\}  }\right)  ,\label{eq:CVaR-constraint-xi}%
\end{equation}
and let%
\begin{equation}
\mathfrak{c}(K)=\left\{
\begin{array}
[c]{ll}%
V_{0}-xe^{-rT}K+\frac{xe^{-rT}}{\alpha}\mathbb{E}(\left(  K-S(T)\right)
^{+}\mathbf{1}_{\left\{  S_{T}\leq b(K)\right\}  }) & \quad\text{for }K>K^{\ast
}\\
V_{0}-xe^{-rT}K & \quad\text{for }K\leq K^{\ast}.%
\end{array}
\right.  \label{eq:CVaR-xi}%
\end{equation}
Let $H$ be defined by (\ref{eq:CVaR-dyn-payoff}). Then the solution of problem (\ref{eq:CVaR-dyn-prob}) is
\begin{equation}
CVaR^{\alpha}(e^{-rT}(V_{\xi}(T)-H))=\min_{K}\mathfrak{c}(K),\label{eq:CVaR-minimising}%
\end{equation}
and the optimal strategy is the one replicating the contingent claim with the
payoff%
\begin{equation}
\left(  K-S(T)\right)  ^{+}\mathbf{1}_{\left\{  S_{T}>b(K)\right\}
}.\label{eq:G-hedging}%
\end{equation}

\end{theorem}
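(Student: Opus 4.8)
The plan is not to reprove the dynamic optimisation of \cite{Meln} from scratch, but to specialise their Theorem~2.4 to the payoff (\ref{eq:CVaR-dyn-payoff}) and check that their generic expressions collapse to the put-based formulas stated here. It is worth recalling the mechanism behind the general result. Setting $X=e^{-rT}(V_{\xi}(T)-H)$ and using the Rockafellar--Uryasev representation
\[
\mathrm{CVaR}^{\alpha}(X)=\min_{a\in\mathbb{R}}\Bigl\{-a+\tfrac{1}{\alpha}\,\mathbb{E}\bigl[(a-X)^{+}\bigr]\Bigr\},
\]
problem (\ref{eq:CVaR-dyn-prob}) factors into an outer minimisation over the scalar $a$ and, for each fixed $a$, an inner minimisation of $\mathbb{E}[(a-X)^{+}]$ over admissible self-financing strategies $\xi$ funded with $c$, for which $\mathbb{E}_{\ast}(e^{-rT}V_{\xi}(T))\le c$ and $V_{\xi}(T)\ge 0$.

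First I would substitute $H=e^{rT}V_{0}-xS(T)$ and re-parametrise the scalar. A direct computation gives $a-X=a+e^{-rT}(H-V_{\xi}(T))=e^{-rT}\bigl(x(K-S(T))-V_{\xi}(T)\bigr)$ once we put $a=xe^{-rT}K-V_{0}$, so $-a=V_{0}-xe^{-rT}K$ and $K$ sweeps $\mathbb{R}$ as $a$ does. Because $V_{\xi}(T)\ge 0$, on $\{S(T)>K\}$ the quantity $(x(K-S(T))-V_{\xi}(T))^{+}$ vanishes regardless of $\xi$, so the inner problem is the F{\"o}llmer--Leukert efficient hedging problem
\[
\min_{\xi}\ \mathbb{E}\Bigl[\bigl(x(K-S(T))^{+}-V_{\xi}(T)\bigr)^{+}\Bigr]\qquad\text{subject to}\qquad\mathbb{E}_{\ast}\bigl(e^{-rT}V_{\xi}(T)\bigr)\le c .
\]
(The constraint $V_{\xi}(T)\ge 0$ is exactly what makes the problem non-trivial: $H$ itself is not replicable by an admissible strategy, since $H<0$ when $S(T)$ is large.)

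The core of the proof --- and the step I expect to be the main obstacle --- is solving this efficient hedging problem and reading off its value. By a Neyman--Pearson argument the optimal partial hedge replicates $x(K-S(T))^{+}\mathbf{1}_{A}$ for a level set $A=\{\,d\mathbb{P}/d\mathbb{P}_{\ast}>\lambda\,\}$; in the Black--Scholes model $d\mathbb{P}/d\mathbb{P}_{\ast}$ is a monotone function of $W(T)$, hence of $S(T)$, so $A=\{S(T)>b(K)\}$ and the constant $\lambda$, equivalently the knock-out level $b(K)$, is pinned down by exhausting the budget $c$ --- this is (\ref{eq:CVaR-constraint-xi}). A case split is then forced. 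If $K\le K^{\ast}$, where $K^{\ast}$ is the strike for which the Black--Scholes value of $x$ puts equals $c$, the claim $x(K-S(T))^{+}$ is fully affordable, the shortfall is zero, and the inner value is $-a=V_{0}-xe^{-rT}K$. If $K>K^{\ast}$ the budget binds, $b(K)>0$ is a genuine knock-out level, and the residual shortfall equals $x\,\mathbb{E}[(K-S(T))^{+}\mathbf{1}_{\{S(T)\le b(K)\}}]$; scaling it by $e^{-rT}/\alpha$ and adding $-a$ gives the first branch of (\ref{eq:CVaR-xi}).

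Finally I would assemble the pieces: feeding the inner value back into $-a+\tfrac{1}{\alpha}\mathbb{E}[(a-X)^{+}]$ reproduces $\mathfrak{c}(K)$ in both regimes, so the outer minimisation over $a$ is $\min_{K}\mathfrak{c}(K)$, which is (\ref{eq:CVaR-minimising}), while the optimiser is, by construction, the replicating strategy of the (suitably scaled) knock-out claim (\ref{eq:G-hedging}). Two routine loose ends remain: that $b(K)$ exists and is unique for $K>K^{\ast}$, which follows from the monotonicity of $b\mapsto\mathbb{E}_{\ast}(e^{-rT}(K-S(T))^{+}\mathbf{1}_{\{S(T)>b\}})$, and that $\mathfrak{c}$ is continuous at $K^{\ast}$, where $b(K^{\ast})=0$ and the extra term vanishes. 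All that is genuinely new relative to \cite{Meln} is the bookkeeping converting their generic formulas into the displayed ones.
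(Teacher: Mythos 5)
Your proposal is correct and takes essentially the same route as the paper, which offers no proof at all but simply quotes the result as a reformulation of Theorem~2.4 of \cite{Meln} specialised to the payoff (\ref{eq:CVaR-dyn-payoff}); your bookkeeping (the substitution $a=xe^{-rT}K-V_{0}$, the reduction of the inner problem to F\"ollmer--Leukert efficient hedging of $x(K-S(T))^{+}$ with budget $c$, and the split of regimes at $K^{\ast}$) checks out. Your added sketch of why the cited theorem holds is a faithful account of the Rockafellar--Uryasev/Neyman--Pearson mechanism that \cite{Meln} actually uses, so it goes beyond the paper only in supplying detail the paper delegates to the reference.
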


Since%
\[
\left(  K-S(T)\right)  ^{+}\mathbf{1}_{\left\{  S_{T}\leq b\right\}  }=\left(
b-S\left(  T\right)  \right)  ^{+}+\left(  K-b\right)  \mathbf{1}_{\left\{  S(T)\leq
b\right\}  },
\]
the term involving expectation in (\ref{eq:CVaR-xi}) is%
\begin{align*}
\mathbb{E}&(\left(  K-S(T)\right)  ^{+}\mathbf{1}_{\left\{  S_{T}\leq b\right\}  }) = \\
&=e^{\mu T}\mathbb{E}\left(  e^{-\mu T}\left(  b-S\left(  T\right)  \right)
^{+}\right)  +\left(  K-b\right)  \mathbb{E}\left(  \mathbf{1}_{\left\{  S(T)\leq
b\right\}  }\right)  \\
&  =e^{\mu T}P\left(  \mu,T,b,S(0),\sigma\right)  +\left(  K-b\right)
N\left(  -d_{-}\left(  \mu,T,b,S(0),\sigma\right)  \right)  .
\end{align*}
Similarly, since%
\[
\left(  K-S(T)\right)  ^{+}\mathbf{1}_{\left\{  S_{T}>b\right\}  }=\left(
K-S(T)\right)  ^{+}-\left(  b-S(T)\right)  ^{+}-\left(  K-b\right)
\mathbf{1}_{\left\{  S(T)\leq b\right\}  },
\]
the constraint (\ref{eq:CVaR-constraint-xi}) is%
\begin{align*}
c    =&\, xP\left(  r,T,K,S(0),\sigma\right)  -xP\left(  r,T,b,S(0),\sigma\right)
\\
&  -x\left(  K-b\right)  e^{-rT}N\left(  -d_{-}\left(  r,T,b,S(0),\sigma
\right)  \right)  .
\end{align*}
This means that we have analytic formulae for all the ingredients of Theorem
\ref{th:dynamic-CVaR}, and thus problem (\ref{eq:CVaR-minimising}) can be
solved numerically with relative ease.

\begin{Example}
\label{ex:dynamic}As in Example \ref{ex:puts}, consider $S(0)=100$, $\mu=10\%$,
$\sigma=0.2$, $r=3\%$ and the hedging costs $c=20,40,60,\ldots,160$. The $K$
solving (\ref{eq:CVaR-minimising}) and the resulting optimal $CVaR^{\alpha
}\left(  e^{-rT}(V_{\xi}(T)-H\right))  $ are as follows:
\begin{center}
\begin{tabular}{c c c c} \hline\hline$c$ &	$x$ & K& $CVaR^{\alpha}$  \\
[0.5ex] \hline20 &	9.8& 87.06 & 172.06  \\
40 &	9.6& 94.43 & 120.23  \\
60 &	9.4& 99.84 & 89.25  \\
80 &	9.2& 104.41 & 67.85  \\
100 &	9 & 108.53 & 52.10  \\
120 &	8.8 & 112.40 & 40.12  \\
140 &	8.6 & 116.12 & 30.84  \\
160 &	8.4 & 119.78 & 23.59  \\
[1ex] \hline\end{tabular}
\end{center}%

\end{Example}

By comparing the values from tables in Examples \ref{ex:puts} and \ref{ex:dynamic}, we
see that optimal $\mathrm{CVaR}^{\alpha}$ from dynamic hedging are close to
$\mathrm{CVaR}^{\alpha}$ for the static hedging with puts. Since the difference is
small, an investor might prefer to buy a portfolio of puts and go for a static
hedging position, rather than engage in a dynamic hedging strategy.

\section{Conclusion}\label{sec:conclusion}
We have provided an analytic solution for $\mathrm{CVaR}$ of a position in stock hedged by put options. 
We have shown that the problem of minimising $\mathrm{CVaR}$ reduces to a linear programming problem that can easily be solved in practice. We have demonstrated that thus obtained results are close to the ones attainable using dynamic hedging.





\end{document}